\documentclass[11pt]{article}
\usepackage[T1, OT1]{fontenc}
\DeclareTextSymbolDefault{\dh}{T1}
\DeclareTextSymbolDefault{\TH}{T1}
\usepackage{amsmath}
\usepackage{colortbl}
\usepackage{amssymb}
\usepackage{physics}
\usepackage[dvipsnames]{xcolor}
\usepackage{color}
\usepackage{amsthm}
\usepackage{bookmark}

\usepackage{hyperref}
\newtheorem{theorem}{Theorem}

\usepackage{authblk}
\usepackage{graphicx}
\usepackage{physics}
\usepackage{graphicx}
\usepackage{booktabs}
\usepackage{extarrows}
\usepackage{amsmath,amssymb}
\usepackage[a4paper]{geometry}

\usepackage{color,soul}
\usepackage{xspace}

\newcommand{\Vu}{${\sf V_u}$\xspace}
\newcommand{\rc}{{\sf Reg-Center}\xspace}
\newcommand{\vc}{{\sf Vot-Center}\xspace}
\newcommand{\po}{{\sf Poll-Officer}\xspace}
\newcommand{\cc}{{\sf Count-Center}\xspace}
\newcommand{\evot}{{\sf E2E-PQ-EVot}\xspace}

\title{A Post-Quantum Secure End-to-End Verifiable E-Voting Protocol Based on Multivariate Polynomials}

\author[1]{Vikas Srivastava}
\author[2]{Debasish Roy}
\author[3]{Sihem Mesnager}
\author[4,*]{Nibedita Kundu}
\author[5]{Sumit Kumar Debnath}
\author[2]{Sourav Mukhopadhyay}

\affil[1]{Department of Mathematics, IIT Madras; {\tt vikas.math123@gmail.com}}
\affil[2]{Department of Mathematics, IIT Kharagpur; {\tt debasish.roy@maths.iitkgp.ac.in}, {\tt sourav@maths.iitkgp.ac.in}}
\affil[3]{Department of Mathematics, University of Paris VIII, F-93526
	Saint-Denis, University Sorbonne Paris Cité, LAGA, UMR 7539, CNRS, 93430 Villetaneuse, and
	Télécom Paris, 91120 Palaiseau, France; {\tt smesnager@univ-paris8.fr}}
\affil[4]{Krishna Kumar Santosh Kumar Smriti Vidyapith, West Bengal, India  {\tt nknkundu@gmail.com}}
\affil[5]{Department of Mathematics, NIT Jamshedpur; {\tt sdebnath.math@nitjsr.ac.in, sd.iitkgp@gmail.com}}

\begin{document}
\date{}
\maketitle

\begin{abstract}

Voting is a primary democratic activity through which voters select representatives or approve policies. Conventional paper ballot elections have several drawbacks that might compromise the fairness, effectiveness, and accessibility of the voting process. Therefore, there is an increasing need to design safer, effective, and easily accessible alternatives. E-Voting is one such solution that uses digital tools to simplify voting. Existing state-of-the-art designs for secure E-Voting are based on number-theoretic hardness assumptions. These designs are no longer secure due to quantum algorithms such as Shor's algorithm. We present the design and analysis of \textit{first} post-quantum secure end-to-end verifiable E-Voting protocol (namely \evot) based on multivariate polynomials to address this issue. The security of our proposed design depends on the hardness of the MQ problem, which is an NP-hard problem. We present a simple yet efficient design involving only standard cryptographic primitives as building blocks.

\end{abstract}
{\bf Keywords}: Post-quantum cryptography; MPKC;  E-Voting; Decentralized E-Voting

\section{Introduction \label{sec:intro}}

Voting is a fundamental democratic process where citizens of a country or employees of an organization cast ballots to elect representatives or make decisions on policies. It ensures that the government/organization reflects the will of the people. Voting can be conducted through various methods, including traditional paper ballots and electronic systems. Effective voting processes are essential for maintaining the legitimacy of democratic institutions. In addition, it facilitates civic engagement and ensures that diverse voices are heard in decision-making. Most existing state-of-the-art election methods are based on traditional paper ballot elections. As per the report of the highly reputable Pew Research Center: \footnote{\url{https://www.pewresearch.org/short-reads/2020/10/30/from-voter-registration-to-mail-in-\\ballots-how-do-countries-around-the-world-run-their-elections/}}.
\begin{quote}

``Paper ballots are by far the most common form of voting. Votes are cast by manually marking ballots in 209 of the 227 countries and territories for which the ACE Electoral Knowledge Network has data. In addition to paper ballots, electronic voting machines are used in about 10\% of the countries and territories for which data is available. Electronic voting machines are used in some large countries, such as India and the U.S., as well as in smaller ones like Singapore. Voting by internet is used in four countries: Armenia, Canada, Estonia and Switzerland."
\end{quote}

\noindent Although it is historically significant and widely used, paper ballot election has several flaws that can undermine the integrity, efficiency, and accessibility of the electoral process. One of the primary drawbacks of paper ballot elections is its logistical complexity. Organizing a paper-based election requires resources for printing ballots, setting up polling stations, and transporting materials. Moreover, managing and securing these materials to prevent tampering and loss is also a big challenge. Long lines and wait times at polling stations can discourage voters on election day. Security is also a significant issue in paper ballot elections. The physical nature of ballots makes them susceptible to tampering, theft, and destruction. Ballot boxes can be sealed with fraudulent votes, or legitimate ballots can be removed or altered. Ensuring the secure storage and transporting of ballots to counting centers is often vulnerable to breaches.
Moreover, the secrecy of the vote can be compromised if procedures are not strictly followed. Thus, it leads to voter intimidation or coercion. The manual counting of paper ballots is prone to human error. Voters with disabilities may face difficulty accessing polling stations or marking ballots. Similarly, voters in remote or rural areas might need help to reach polling locations. Paper ballots' production, distribution, and disposal consume significant resources and contribute to environmental degradation. To summarize, ensuring complete transparency in the paper-ballot-based electoral process is challenging. Observers and stakeholders must be alert throughout the process to ensure that the election is fair and free from manipulation.

As technology advances, there is a growing need to explore more secure, efficient, and accessible alternatives. E-Voting, specifically online voting is one such solution that simplifies the casting and counting of votes. This method enables voters to participate in elections through secure websites or applications from any location with internet access. In an online voting system, voters typically log in using unique credentials. Once authenticated, voters access an electronic ballot to make their selections. The system securely transmits the cast ballots to a central database. The online voting system may provide a confirmation receipt to the voter or allow them to verify that their vote has been accurately recorded. E-voting has a lot of advantages over traditional paper ballot elections. It significantly reduces the cost of printing, distributing, and managing physical ballots, thereby decreasing the overall expenditure incurred in the election process. Voters can cast their votes from any location with internet access. Thus, it eliminates the need for extensive polling station setups and transportation of materials. E-Voting systems have advanced security features to protect against tampering and fraud. Encryption techniques can be used to secure the transmission and storage of votes, while digital signatures and authentication methods may be utilized to verify voter identities. E-Voting increases accessibility by providing convenient voting options for individuals with disabilities. In addition, voters in remote or rural areas and expatriates can easily participate without traveling long distances. The flexibility of online voting platforms also allows for extended voting periods, reducing time constraints and making it easier for voters to find a convenient time to cast their ballots. E-Voting systems can enhance transparency through secure and verifiable processes. Voters receive confirmation receipts or can verify their votes through a secure system.

The advent of quantum computing poses a significant threat to existing state-of-the-art E-Voting protocols primarily based on number-theoretic hardness assumptions. Quantum computers can solve number theoretic hard problems such as integer factorization and discrete logarithms exponentially faster than classical computers \cite{shor2002introduction}. This undermines the security of widely used cryptographic algorithms, such as RSA and ECC, which form the backbone of many modern E-Voting protocols. Consequently, there is an urgent need for cryptographic systems that can withstand quantum attacks. Multivariate Public Key Cryptography (MPKC) is one such research direction towards building post-quantum\cite{bernstein2017post} secure systems. Multivariate-based schemes are based on the hardness of solving systems of multivariate polynomial equations (also known as the MQ Problem). MQ-problem is NP-hard\cite{garey1979computers} even for a finite field of size two. Multivariate schemes exhibit compact key sizes and computational efficiency, making them well-suited for E-Voting systems. \\\\

\noindent {\it Related works}. Most of the existing state-of-the-art E-Voting protocols are based on lattice-based cryptography. Chillotti \cite{chillotti2016homomorphic} presented a  post-quantum electronic voting protocol based on LWE fully homomorphic encryption. \cite{chillotti2016homomorphic} provides transparency due to the use of public homomorphic operations. In addition, it eliminated the need for zero-knowledge proofs to validate individual ballots or the correctness of the final election result. Moreover, the protocol in \cite{chillotti2016homomorphic} required only two trustees, in contrast to classical proposals that used threshold decryption via Shamir’s secret sharing. Ronne et al. \cite{ronne2020short} presented an approach for performing the tallying work in the coercion-resistant JCJ voting protocol \cite{juels2005coercion} in linear time using fully homomorphic encryption (FHE). Their proposed enhancement aimed to make \cite{juels2005coercion} quantum-resistant while maintaining its underlying structure. In addition, the authors demonstrated how the removal of invalid votes could be achieved in linear time by FHE, hashing, zero-knowledge proofs of correct decryption, verifiable shuffles, and threshold FHE.
Boyen et al.\cite{boyen2020verifiable} designed and analyzed the first lattice-based verifiable and practical post-quantum mix net with external auditing. The authors made significant contributions by formally proving that their mix net provides a high level of verifiability. Furthermore, the system guarantees accountability by enabling the identification of misbehaving mix servers. A multi-candidate electronic voting scheme employing identity-based fully homomorphic encryption and digital signature algorithm was proposed by Liao et al. \cite{liao2020multi}. A traceable ring signature is a variant of a ring signature that allows a signer to sign a message anonymously labeled with a tag on behalf of a group of users. The identity of the signer is released if two signatures are created with the same tag. Feng et al. \cite{feng2020traceable} proposed an E-Voting mechanism utilizing traceable ring signatures. \cite{farzaliyev2021improved} presents a new design for lattice-based post-quantum mix-nets for E-Voting. This improved architecture enhances the efficiency of zero-knowledge proofs. Kaim et al. \cite{kaim2021post} introduced a post-quantum online voting scheme based on lattice assumptions. It avoids the use of traditional homomorphic primitives and mix-nets. \cite{kaim2021post} utilized a threshold blind signature scheme and lattice-based encryption to construct a robust publicly verifiable E-Voting protocol that can handle complex ballots efficiently.

\subsection{Our Contribution \label{sec:our-contri}}
The major contributions of the paper are listed below.

\begin{itemize}
\item	We proposed the \textit{first} multivariate-based end-to-end verifiable post quantum secure E-Voting protocol (namely \evot). The security of \evot relies on the difficulty of solving a system of multivariate quadratic equations (also known as MQ problem).  Thus, \evot is secure against attacks from quantum computers.

\item \evot is a \textit{generic} protocol. In particular, the design of \evot involves standard cryptographic primitives such as encryption, signature, commitment schemes, and hash functions. Employing these standard primitives simplifies the security and performance analysis of \evot.

\item Our protocol achieves end-to-end (E2E) verifiability by ensuring compliance with the three key properties of verifiability in electronic voting. Firstly, it guarantees that each voter can verify that their intended candidate is correctly captured in the cast vote. In addition, \evot ensures that the cast vote is accurately recorded by the system. Finally, \evot also enables both voters and observers to verify that all recorded votes are tallied correctly. This is facilitated through the use of cryptographic proofs and public bulletin boards, which provide transparency and allow independent verification of the tallying process. Thus, \evot achieves robust E2E verifiability, and hence, ensures the integrity and trustworthiness of the voting process.

\item Anyone can verify the voting results by accessing the key of the \cc. Therefore, complete transparency and fairness is ensured through \evot.

\item The most advanced E2E E-Voting solutions in the current state-of-the-art require a public bulletin board. In practice, bulletin board is difficult to manage.

\item \evot comprises of the five entities: Voter \Vu, Registration Center \rc, Polling Officer \po, Voting Center \vc, and Counting Center \cc. Our proposed design is secure against attacks by malicious \po, \cc, and \vc. In addition, we show that \evot is secure against collusion attacks and attacks by insider and outsider users.

\item \evot is fast and efficient. It can also work on memory-constraint devices because the fundamental operations required in multivariate-based \evot are only modular field multiplications and additions over a field of characteristic two.

\end{itemize}

\subsection{Paper Organization}
The manuscript is organized as follows. We introduce the E-Voting and related works in the post-quantum context in Section \ref{sec:intro}. The significant contributions of the manuscript are summarized in Section \ref{sec:our-contri}. Section \ref{sec:prelims} contains the preliminary information about system building blocks such as multivariate-based signature and commitment protocol. We present the design of \evot in Section \ref{sec:proposed} followed by security analysis in Section \ref{sec:security} and performance analysis in Section \ref{sec:performance}. The paper is concluded in Section \ref{sec:conclusion}.

\section{Preliminaries \label{sec:prelims}}

\subsection{End-to-End (E2E) Verifiable E-Voting}

Verifiability in the E-Voting is typically defined as follows:

\begin{itemize}
\item \textit{Cast-as-intended:} A voter can verify that their intended candidate is correctly captured in the cast
	vote.
\item \textit{Recorded-as-cast:} A voter can verify that their cast vote is correctly recorded by the system.
\item \textit{Tallied-as-recorded:} A voter (or any observer) can verify that all recorded votes are tallied correctly.
\end{itemize}

Voting systems that satisfy the above three properties are called \textbf{End-to-End (E2E)} \cite{hao2016real} verifiable. 

\subsection{System Building Blocks \label{sec:sys-block}}

\noindent {\bf Multivariate Encryption Syntax \cite{ding2009multivariate}}\label{preliminaries:2}
A general MPKC-based encryption scheme contains the following three algorithms. We denote a finite field of order $q$ by $\mathbb{F}_q$.
\begin{itemize}
	\item ${\sf MQE.Kg}$: It generates a secret key $(\mathcal{L},\mathcal{F},\mathcal{T})$ and a public key $\mathcal{P}=\mathcal{L}\circ \mathcal{ F} \circ \mathcal{ T}$, where $\mathcal{ L}:\mathbb{F}_q^m\rightarrow \mathbb{F}_q^m $ and $\mathcal{ T}:\mathbb{F}_q^n\rightarrow \mathbb{F}_q^n$ are two invertible affine maps, and $\mathcal{F}: \mathbb{F}_q^n\rightarrow \mathbb{F}_q^m $ is an easily invertible function, known as ``Central Map". Thereby, $\mathcal{P}$ is a system of $m \in \mathbb{Z}$ number of multivariate polynomials in $n \in \mathbb{Z}$ number of variables.
	
	\item ${\sf MQE.Enc}$: Given a message $\mathbf{x}\in\mathbb{F}_q^m$ and a public key $\mathcal{P} = \mathcal{L}\circ \mathcal{ F }\circ \mathcal{T}$, the encryptor
	derives the ciphertext $\mathbf{y} = \mathcal{P}(\mathbf{x}) \in \mathbb{F}_q^m$.
	
	\item ${\sf MQE.Dec:}$: Utilizing the secret the secret key $(\mathcal{L},\mathcal{F},\mathcal{T})$, the decryptor will decrypt a ciphertext  $\mathbf{y} \in \mathbb{F}_q^m$. For that the decryption  recursively   needs calculates $\mathbf{z} = \mathcal{L}^{-1}(\mathbf{y}) \in \mathbb{F}_q^m$, $\mathbf{w} = \mathcal{F}^{-1}(\mathbf{z}) \in \mathbb{F}_q^n$ and $\mathbf{x} = \mathcal{ T}^{-1}(\mathbf{w}) \in \mathbb{F}_q^n$. Finally, it outputs $\mathbf{x} \in \mathbb{F}_q^n$ as the plaintext corresponding to the ciphertext $\mathbf{y} \in \mathbb{F}_q^m$.
\end{itemize}
Figure~\ref{m-enc} shows the communication flow of a multivariate encryption scheme. We assume that {\sf MQE} is an IND-CCA secure encryption scheme.\\\\

\begin{figure}
	\centering
	\includegraphics[scale=0.18]{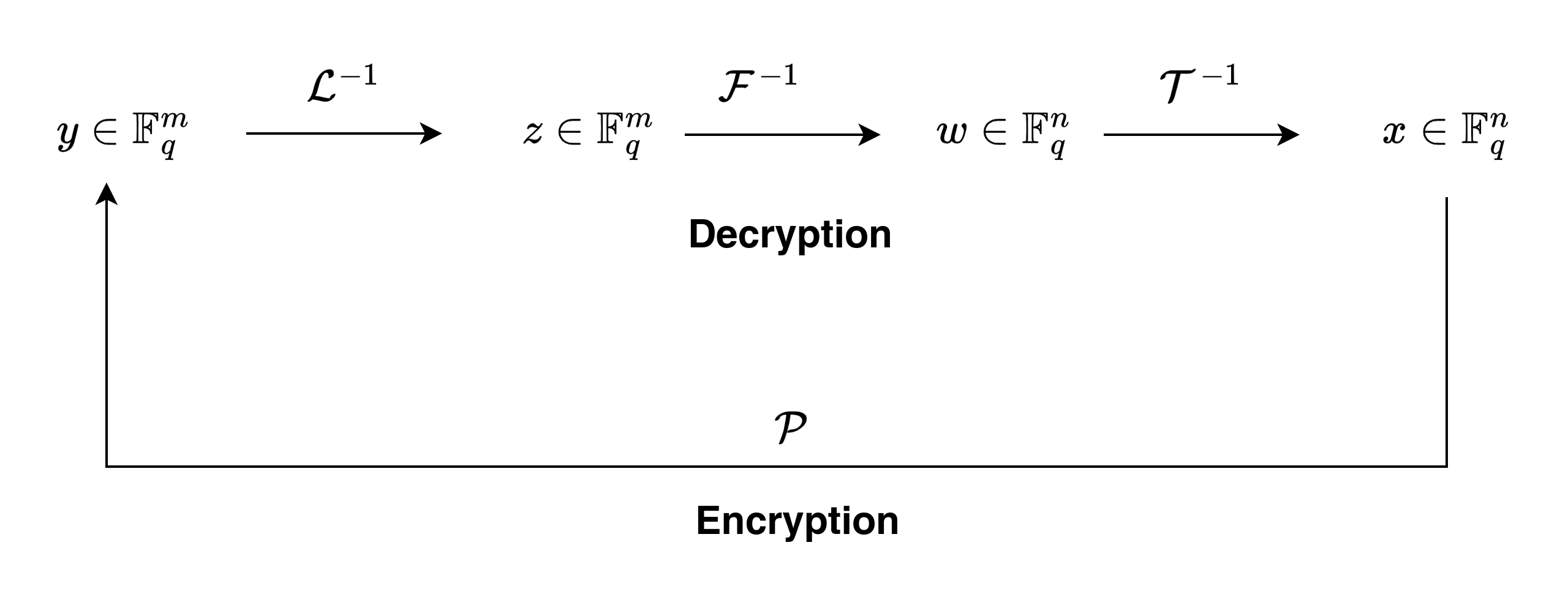}
	\caption{Multivariate Encryption}
	\label{m-enc}
\end{figure}
\noindent {\bf Multivariate Signature Syntax \cite{ding2009multivariate}}\label{preliminaries:4} The following three algorithms are involved in an MPKC based signature scheme.
\begin{itemize}
	\item ${\sf MQS.Kg}$: A central map, $\mathcal{F} : (\mathbb{F}_q)^n \rightarrow (\mathbb{F}_q)^m $, which is a quadratic map has to be selected, where we can find its pre-image easily. $m=m(\kappa)$ multivariate quadratic polynomial in $n=n(\kappa)$ variables are use to construct the structure of $\mathcal F$  with the security parameter $\kappa$. We use invertible affine transformations $\mathcal S : (\mathbb{F}_q)^m \rightarrow (\mathbb{F}_q)^m$ and $\mathcal T : (\mathbb{F}_q)^n \rightarrow (\mathbb{F}_q)^n$ to mask the design of $\mathcal F$ in the public key. $(\mathcal S,\mathcal F,\mathcal T)$ and $\mathcal{P}=\mathcal S\circ \mathcal F \circ \mathcal T$ are chosen as the signing key and the verification key, respectively.
	
	\item ${\sf  MQS.Sign}$: The signer recursively computes $\mathbf{y} = \mathcal S^{-1}(\mathbf{x}) \in (\mathbb{F}_q)^m$, $\mathbf{z} =\mathcal F^{-1}(\mathbf{y}) \in (\mathbb{F}_q)^n$ and $\mathbf{w} = \mathcal T^{-1}(\mathbf{z}) \in (\mathbb{F}_q)^n$ for signing a text $\mathbf{x}\in (\mathbb{F}_q)^m$. Finally, $\mathbf{w}$ is provided as $\mathbf{x}$'s signature. Note that $\mathcal F^{-1}(\mathbf{y})$ signifies one (among perhaps many) pre-images of $\mathbf{y}$ under~$\mathcal F$.
	
	\item ${\sf MQS.Ver}$: If equality $\mathbf{x} \stackrel{?}{=} \mathcal{P}(\mathbf{w})$ holds, a message-signature pair $(\mathbf{x},\mathbf{w})$ is accepted by the verifier, otherwise rejects.
\end{itemize}
Figure~\ref{m-sig} shows the communication flow of a multivariate signature scheme. We assume that {\sf MQS} is existentially unforgeable against chosen message attack.\\
\begin{figure}
	\centering
	\includegraphics[scale=0.15]{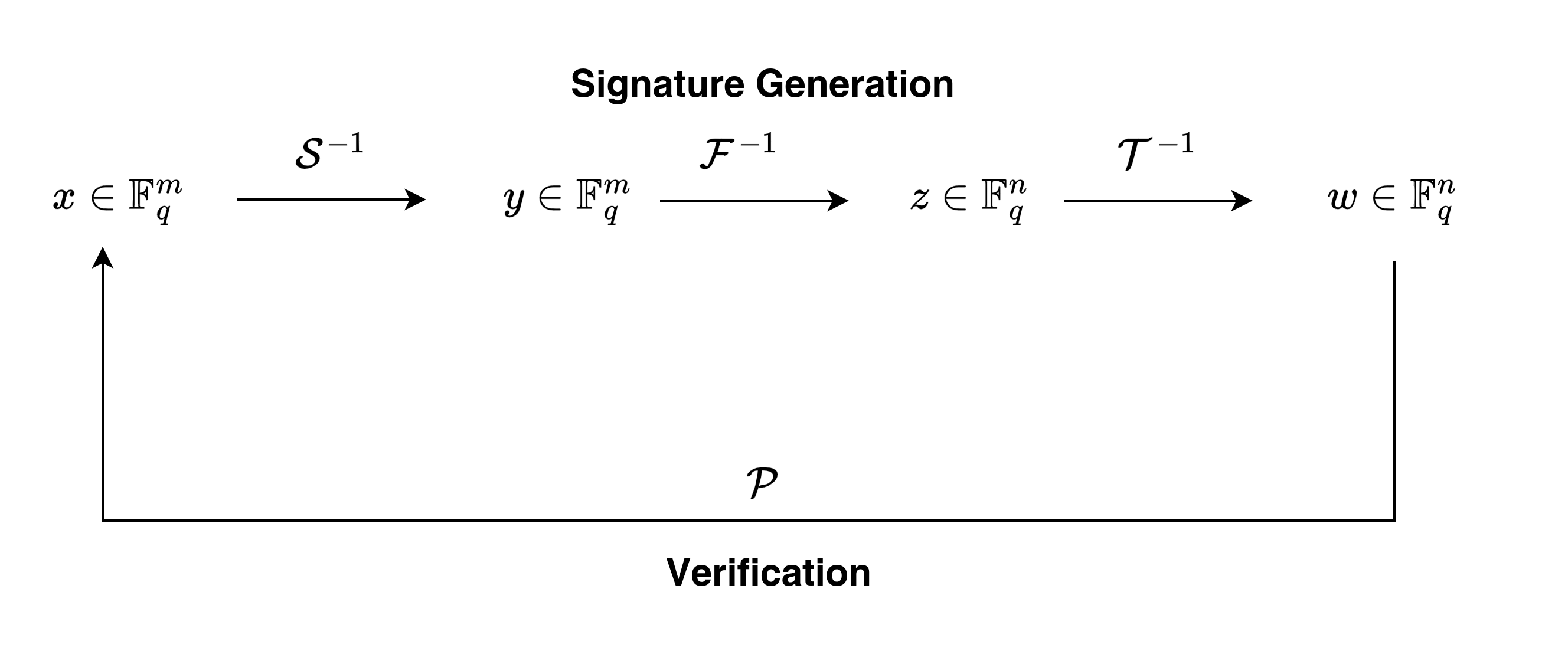}
	\caption{Multivariate Signature}
	\label{m-sig}
\end{figure}

\noindent {\bf A Statistically Binding and Computationally Hiding Commitment}: A commitment scheme ${\sf Commit} = ( {\sf Comm, Open})$ consists of two PPT algorithms which are defined below.
\begin{enumerate}
	\item  $( c , r ) \leftarrow {\sf Comm}(m)$: Algorithm {\sf  Comm} takes a message $m$ as input, and outputs a commitment-opening pair $(c , r)$.
	\item $1/0 \leftarrow {\sf Open}(m, c , r)$ : Algorithm {\sf Open} takes a message $m$, a commitment $c$ and an opening $r$ as input, and outputs $1$ or $0$ to indicate the validity or invalidity of the opening.
	
\end{enumerate}

The primary security features needed in {\sf Commit} are the following. A cheating receiver cannot learn $m$ from the commitment protocol (hiding property). A cheating committer cannot change his mind about $m$, and the verifier can
check in the opening that the value opened was what the committer had in mind originally (binding property). Each of the properties above can be satisfied either statistically or relative to a complexity assumption. For our generic construction of \evot, we require a commitment scheme {\sf Commit} that satisfies statistical binding and computational hiding.

\section{Proposed E2E Post-Quantum E-Voting Protocol {\sf PQ-EVot} \label{sec:proposed}}
In this section, we present the detailed design of post-quantum secure E-Voting protocol \evot.

\subsection{System Architecture}

We first present the system architecture of {\sf PQ-EVot}. Refer to Figure \ref{fig-total} for an illustration. The system comprises of the five entities: Voter \Vu, Registration Center \rc, Polling Officer \po, Voting Center \vc, and Counting Center \cc. Here, the voter \Vu is the entity that intends to cast a vote in the election. \Vu represents a member of the organization or citizen of a country. The proof that \Vu is a genuine voter means that \Vu should possess a citizenship card, social security card, or some other digital identification certificate containing verifiable data. \Vu, with a valid identity, registers his intention to vote with the \rc. \Vu sends his identity information and pseudonym to \rc. In the following, \rc checks whether the \Vu's identity is valid from the electoral roll databank. If the voter's identity is valid, \rc signs the pseudonym and produces a ticket, which the voter can use for voting. \rc maintains a database of voters' identities, the tickets they were issued, and their pseudonyms. If another person approaches the \rc with the same identity, he can verify from the database and refuse to issue a ticket to the voter. \Vu submits the ticket at the current time to the \po, which authorizes \Vu to vote. The \po verifies the ticket and prepares a ballot paper with the names of candidates and the \po's signature so that the \Vu can vote. The \po casts his choice on the ballot and sends the encrypted ballot paper and a ticket to the \po. The \po collects and sends all such cast ballot papers to the \po. In addition, \po collects the receipt for each ballot paper submitted to the \vc.  After obtaining the ballots, the \vc authenticates the credibility of the cast ballots and the ticket. After verification, \vc publishes the encrypted cast ballot. The voters can check whether their vote has been counted from the bulletin board. The \cc collects and decrypts the encrypted ballot from the Bulletin Board. The main assumption here is that the Counting Center is trusted. CC does this process for each ballot, computes the tally, and publishes the final result on the Bulletin Board. After the result's publication, \cc will publish his secret key on the bulletin board, using which anyone can verify the election results.

\begin{figure}
	\centering
	\includegraphics[scale=0.08]{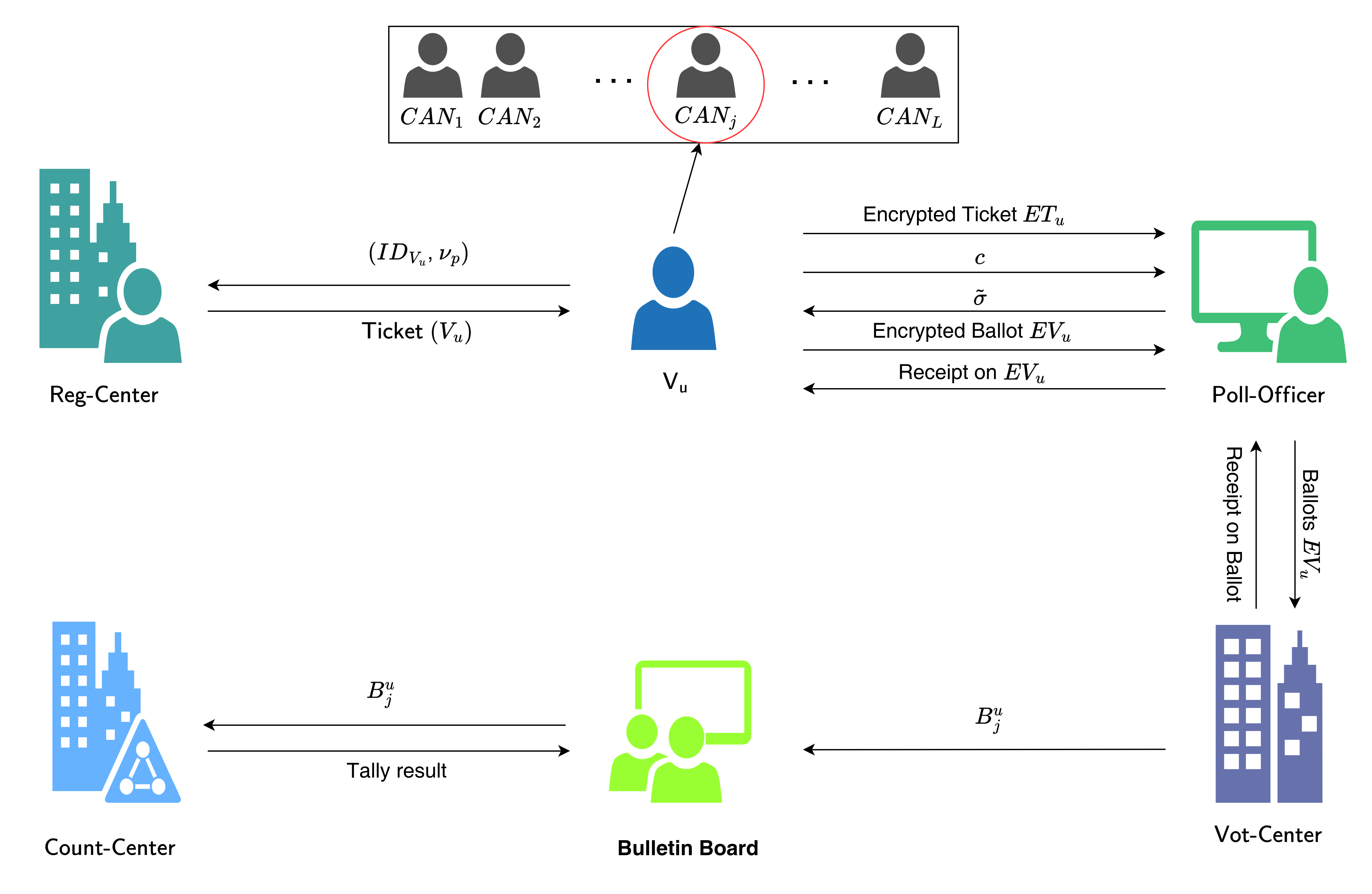}
	\caption{System architecture and communication flow of \evot.}
	\label{fig-total}
\end{figure}

\subsection{Threat/Attack Model}
The security of our designed protocol \evot is based on the fact that solving a random system of multivariate quadratic equations is NP-hard. In particular, the MQ problem is defined as follows.\\
\noindent {\bf{The MQ Problem \cite{ding2009multivariate}:}} Let $\{p_1(x_1,\ldots,x_{\eta}),\ldots,p_{\lambda}(x_1,\ldots,x_{\eta})\}$ be a system of second degree multivariate polynomials over $\mathbb{F}_q$. We have to find a solution $\mathbf{x}=(x_1,\ldots,x_{\eta})$ of the system of equations $p_1(\mathbf{x})=\ldots=p_{\lambda}(\mathbf{x})=0$.
\indent It is worth underlining that the hardness of the MQ problem relies on the number of the variables (i.e. $\eta$) and polynomials (i.e. $\lambda$). Even for polynomials of degree $2$ over $\mathbb{F}_2$, the MQ problem has been demonstrated to be NP-hard~\cite{garey1979computers}. 

\noindent Here, we assume that \rc is a trusted authority. \po, \cc, and \vc may act maliciously.  Therefore, the attack models is stated as following.
\begin{itemize}
	\item The Voter \Vu is not trusted, and the voter is subject to coercion. 
	\item Registration Center \rc is trusted. 
	\item \po, \cc, \vc is not trusted and may try to sabotage the election. 
	\item \po, \cc, \vc may collude together and try to gain some information from the protocol.
	\item The channel of communication is open unless specified otherwise.
\end{itemize}
The security features for the \evot system are designed to ensure its reliability and integrity. We discuss the desired properties for \evot. Firstly, \textit{eligibility} requires that only eligible voters are allowed to cast ballots, ensuring that the voting process is restricted to those who are authorized to participate. \textit{Non-reusability} ensures that voters can cast only one ballot. It prevents any individual from voting multiple times. \textit{Soundness} guarantees that no unauthorized person can alter another voter's ballot. \textit{Completeness} provides that all voters can verify their votes are included in the total count, ensuring transparency and accuracy in the voting results. \textit{Verifiability} protects against fraudulent activities, ensuring that the final voting results are legitimate and tamper-proof. \textit{Fairness} ensures that no one can access data regarding the tally outcome before the official tally phase, maintaining the integrity of the election process. \textit{Anonymity} safeguards voter privacy by ensuring that no one can link a vote to the voter who cast it. We assume the following model proposed by Quang et al. \cite{50AttackerModel} for our proposed design.

\begin{itemize}
	\item Any communication between protocol actors via a public channel could be intercepted by an attacker.
	\item The intercepted message can be altered, removed, sent again, or redirected by an attacker.
	\item Since the attacker is aware of the description of the protocol, it is considered public.
\end{itemize}
\subsection{Design of {\sf PQ-EVot}}

In this section, we propose the design of our post-quantum secure E-Voting system {\sf PQ-EVot}

\begin{description}
	\item {${\sf Preparation\text{ }Phase} (1^{\kappa}$)}: In the preparation phase, \rc, \po, \vc, and \cc each generates their public key and private key pair by utilizing a secure multivariate-based encryption scheme {\sf MQE} = ({\sf MQE.Kg, MQE.Enc, MQE.Dec}). Let the private key-public key pair of \rc, \po, \cc, and \vc be denoted by $(sk_{E_{{RC}}}, pk_{E_{{RC}}})$, $(sk_{E_{{PO}}}, pk_{E_{{PO}}})$, $(sk_{E_{{CC}}}, pk_{E_{{CC}}})$, and \\$(sk_{E_{{VC}}}, pk_{E_{{VC}}})$ respectively. 
	A random system of multivariate polynomial $\mathcal{P}:\mathbb{F}_q^n\rightarrow \mathbb{F}_q^m$ is also generated as a part of the system parameter. Additionally, \rc, \po, \vc, and \cc employ a secure multivariate signature protocol {\sf MQS}= ({\sf MQS.Kg, MQS.Sign, MQS.Ver}) to generate the signature public key and signature private key. Let the signature private key-public key pair of \rc, \po, \cc, and \vc be denoted by $(sk_{S_{{RC}}}, pk_{S_{{RC}}})$, $(sk_{S_{{PO}}}, pk_{S_{{PO}}})$, $(sk_{S_{{CC}}}, pk_{S_{{CC}}})$, and $(sk_{S_{{VC}}}, pk_{S_{{VC}}})$ respectively. The final system public parameter ${\sf pp}$ is set as $${\sf pp=}( pk_{S_{{RC}}}, pk_{S_{{PO}}}, pk_{S_{{CC}}}, pk_{S_{{VC}}}, pk_{E_{{RC}}}, pk_{E_{{PO}}},pk_{E_{{CC}}}, pk_{E_{{VC}}}, \mathcal{P})$$
	\item ${\sf Registration\text{ }Phase}$: To join the system, a voter \Vu has to register to \rc first. This ensures that only authorized voters can participate in voting. The registration phase is executed as follows:
	\begin{itemize}
		\item Voter selects $a\in \mathbb{F}_q^n$ and computes $v_p=\mathcal{P}({\sf ID}_{\sf V_u}\Vert a) \in \mathbb{F}_q^m$ as the pseudonym. 
		\item The voter \Vu sends $({\sf ID}_{V_u}, v_{p} )$ to the \rc. 
		\item Given the identity information ${\sf ID}_{V_u}$, \rc checks the qualification of \Vu for voting. If the information is correct and \Vu is valid for voting, \rc proceeds to verify the identity to ensure that the forger cannot use some other person's identity to vote on his/her behalf.
		\item \rc signs $v_p$ by using any secure MQ-based signature algorithm ${\sf MQS.Sign}\\(v_p, sk_{S_{RC}})$ to obtain the signature $s_p$. 
		\item \rc returns the certification $ {\sf Ticket(V_u)}=(v_P,s_P)$ to the voter \Vu. \rc publishes all pseudonyms that have obtained certifications. The communication flow of ${\sf Registration\text{ }Phase}$ is illustrated in Figure \ref{reg-phase}.
		
	\end{itemize}
	\begin{figure}
		\centering
		\includegraphics[scale=0.18]{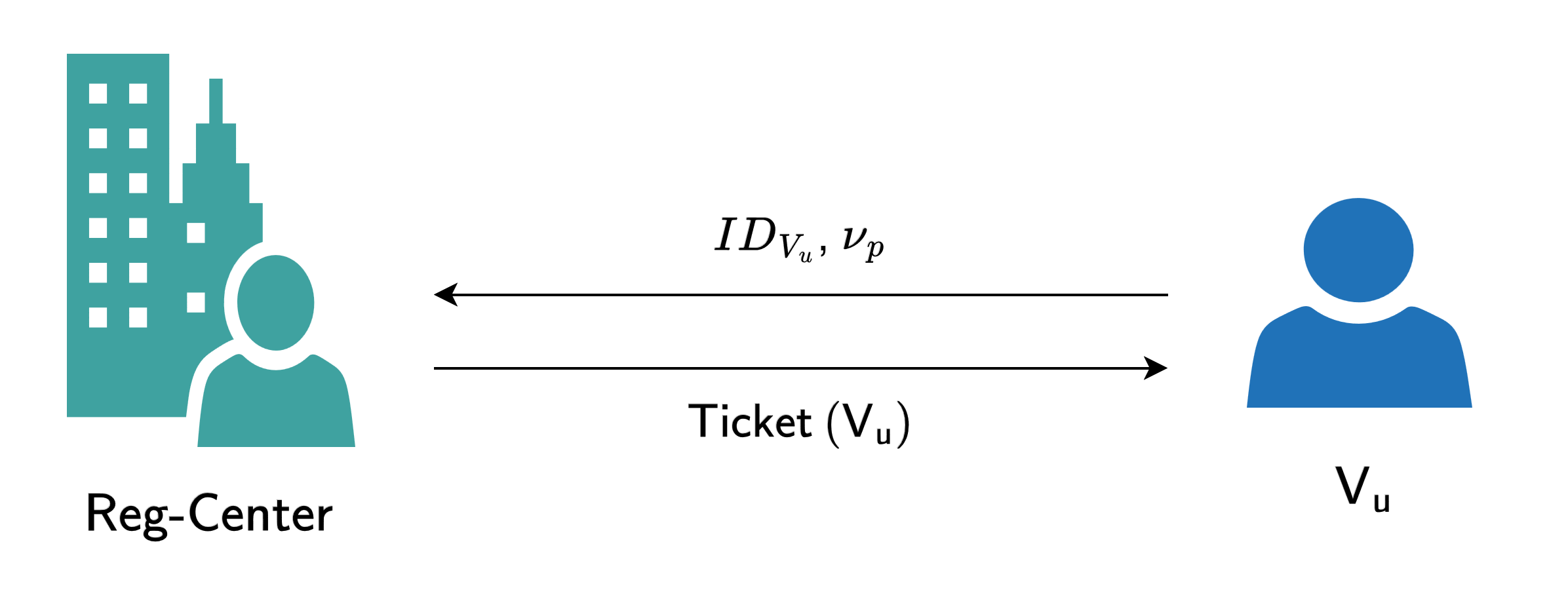}
		\caption{Communication flow in the ${\sf Registration\text{ }Phase}$ of \evot }
		\label{reg-phase}
	\end{figure}

	\item ${\sf Voting\text{ }Phase}$: In voting phase, assume that a legitimate voter \Vu would like to get a signature on a candidate $CAN_j \in \{CAN_1, CAN_2,\ldots, CAN_L \}$ which is obliviously signed by \po, then \Vu executes the following protocol with \po. We use the oblivious signature technique of \cite{zhou2022generic} as a building block in this phase:
	\begin{itemize}
		\item \Vu sends $ET_u={\sf MQE.Enc}({\sf Ticket(V_u)}||tm_u, pk_{E_{PO}})$ to the \po. Here $tm_u$ represents the current time. 
		\item \po verifies whether $tm_u$ and ${\sf Ticket(V_u)}$ are valid, and ${\sf Ticket(V_u)}$ is not reused. If it is valid without reusing, \po stores ${\sf Ticket(V_u)}$ in the database.
		\item  \Vu computes $c, r\leftarrow {\sf Comm}(CAN_j)$ and sends $c$ to \po. Here {\sf Comm} is a secure commitment protocol. From $i=1,\ldots, L$, \po computes $$\sigma_i\leftarrow {\sf MQS.Sign}(sk_{S_{PO}}, CAN_i\mathbin\Vert c)$$
		\item \po sets $\tilde{\sigma}=(\sigma_1, \ldots, \sigma_L)$ and sends $\tilde{\sigma}$ to $V_u$.
		\item From $i=1, \ldots,L$, if $${\sf MQS.Ver}(CAN_i\mathbin \Vert c, \sigma_i, pk_{S_{PO}})=0, \text{ returns $\perp$}$$ \Vu obtains the signature $\Sigma = (\sigma_j,c,r)$ on $CAN_j$.	
		\item \Vu computes $B_j^u={\sf MQE.Enc}(\Sigma||CAN_j, pk_{E_{CC}})$, and $EV_u={\sf MQE.Enc}({\sf Ticket(V_u)}\\||B_j^u, pk_{E_{VC}})$ and sends $EV_u$ to \po. \po provides a receipt on $EV_u$ (say $\sigma_{EV_u, PO}$) to the voter \Vu. 
		\item \po collects all the cast ballots and sends them to \vc. While submitting the cast ballots to \vc, \po obtains a receipt from \vc for each of the submitted ballots. The communication flow of ${\sf Voting\text{ }Phase}$ is illustrated in Figure \ref{vot-phase}.
	\end{itemize}
	
		\begin{figure}
		\centering
		\includegraphics[scale=0.115]{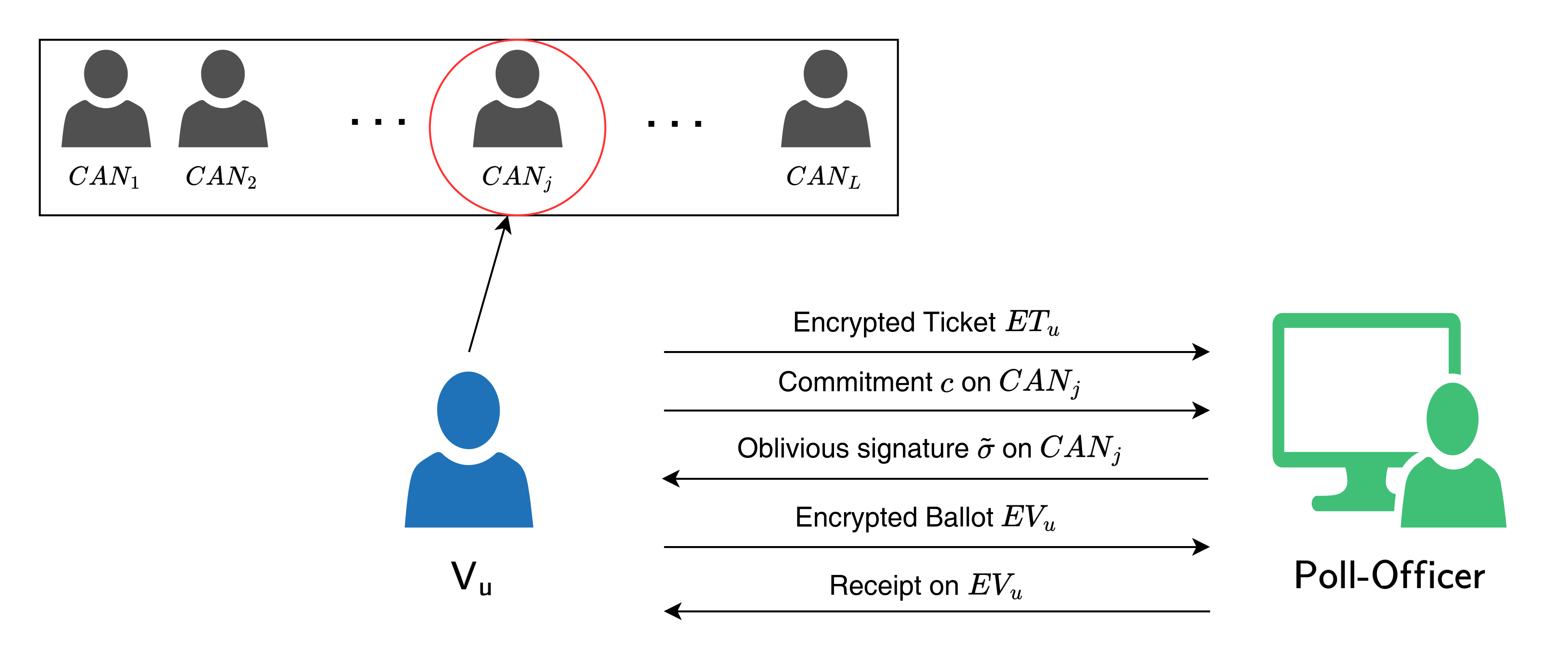}
		\caption{Communication flow in the ${\sf Voting\text{ }Phase}$ of \evot }
		\label{vot-phase}
	\end{figure}

	\item ${\sf Verification\text{ }Phase}$:  In the verification phase, \vc first verifies the cast ballots and, in the following, publishes the result on the bulletin board.
	\begin{itemize}
		
		\item \vc computes ${\sf MQE.Dec}(EV_u)={\sf Ticket(V_u)}||B_j^u$, verifies ${\sf Ticket(V_u)}$, stores $({\sf Ticket(V_u)}||B_j^u)$ in database, and writes $B_j^u$ on Bulletin board for each of the cast ballots. Note that $B_j^u$ is public to all.
		\item \Vu can check whether his/her votes $B_j^u$ are displayed on the Bulletin Board.
	\end{itemize}
	
	\item ${\sf Tally\text{ }Phase}$: This phase describes the tally procedure computed by the \cc. 
	\begin{itemize}
		\item \cc obtains $B_j^u $ from Bulletin Board.
		\item \cc computes ${\sf MQE.Dec}(B_j^u, sk_{E_{CC}})$ to obtain $\Sigma||CAN_j$. In the following, \cc checks the validity of the signature $\Sigma$ for $CAN_j$. If the verification succeeds, then add $1$ in front of $CAN_j$ in the bulletin board
		\item \cc repeats the process of each of the $B_j^u$ and publishes the final tally result on the Bulletin Board.
		\item  At the end, \cc  makes his secret key (decryption key) $ sk_{E_{CC}}$ public. Note that anyone can verify the result by accessing the secret key.  
		
	\end{itemize}
	
\end{description}

\section{Security Analysis \label{sec:security}}

\begin{theorem}
\evot is secure from the attacks by a malicious \po.
\end{theorem}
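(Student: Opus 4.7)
The plan is to enumerate everything a corrupt \po sees or can influence during a run of \evot and show that each item either (i) leaks no useful information about the voter's choice $CAN_j$, or (ii) cannot be forged or substituted without being detected by another honest party. A malicious \po's transcript consists of the incoming $ET_u$ (which it can legitimately decrypt with $sk_{E_{PO}}$ to obtain ${\sf Ticket}(V_u)\|tm_u$), the commitment $c$, the oblivious signature bundle $\tilde\sigma=(\sigma_1,\ldots,\sigma_L)$ that it itself produces, and the outgoing $EV_u$ that it forwards, together with the receipt $\sigma_{EV_u,PO}$ it issues and the receipt it collects from \vc.

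For secrecy of the vote, I would argue in two steps. The only plaintext handle on $CAN_j$ that \po ever sees is $c\leftarrow{\sf Comm}(CAN_j)$; the computational hiding property of {\sf Commit} immediately makes \po's view indistinguishable from one in which the voter chose any other candidate, up to a negligible advantage. The other object that contains $CAN_j$ is the inner ciphertext $B_j^u={\sf MQE.Enc}(\Sigma\|CAN_j,pk_{E_{CC}})$ wrapped inside $EV_u$. Since \po holds neither $sk_{E_{VC}}$ nor $sk_{E_{CC}}$, a standard hybrid reduction shows that any non-negligible advantage in guessing $CAN_j$ from $(EV_u, c)$ breaks the IND-CCA security of {\sf MQE} under $pk_{E_{VC}}$ (or, in the inner hybrid, under $pk_{E_{CC}}$), where the challenge candidate is embedded inside $B_j^u$.

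For integrity and non-impersonation I would appeal to the unforgeability of {\sf MQS} and the hardness of the MQ problem. Producing a valid ticket on behalf of a non-registered voter would require computing $s_P={\sf MQS.Sign}(\cdot,sk_{S_{RC}})$ without $sk_{S_{RC}}$, contradicting the EUF-CMA security of {\sf MQS}. Any attempt by \po to replace the inner $B_j^u$ inside an honestly produced $EV_u$ forces \po to re-encrypt a modified plaintext under $pk_{E_{VC}}$, which a reduction again shows to contradict IND-CCA security of {\sf MQE}; replaying a ticket is caught by \vc's database check mirroring \rc's. A \po that selectively drops ballots is exposed because the voter holds the receipt $\sigma_{EV_u,PO}$ while \po can only account for forwarded ballots via receipts collected from \vc, giving any honest party a publicly verifiable audit trail.

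The main obstacle I expect is the oblivious-signature step: a dishonest \po could in principle deviate from honestly generating all $L$ signatures, producing a $\tilde\sigma$ in which only certain $\sigma_i$ verify and thereby probing, through the abort probability, which $i=j$ the voter intended. My plan is to handle this by invoking the security of the generic oblivious signature construction of \cite{zhou2022generic} as a black box, so that the voter's abort rule ``return $\perp$ if any ${\sf MQS.Ver}(CAN_i\|c,\sigma_i,pk_{S_{PO}})=0$'' makes the joint distribution of (abort, $CAN_j$) independent of \po's malicious choices; together with the hiding of $c$, this preserves candidate-secrecy throughout the entire interaction, completing the argument.
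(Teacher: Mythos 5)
Your vote-secrecy argument is essentially the paper's proof: the paper's entire proof of this theorem is a reduction to the computational hiding of {\sf Commit}, in which an adversary $\mathcal{C}$ embeds the hiding challenge as the commitment shown to \po and tests whether \po's guess of the index $j$ is better than $1/L$ --- exactly your first step. Your additions go beyond the paper and are sound in spirit: the hybrid over $EV_u$ (under $pk_{E_{VC}}$, and internally $pk_{E_{CC}}$) covers the other component of \po's view that depends on $CAN_j$, which the paper silently ignores, and your observation that the voter verifies \emph{all} $L$ signatures $\sigma_i$ so that the abort event is independent of $j$ addresses the selective-failure issue in the oblivious-signature step, again not discussed in the paper.

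The genuine flaw is in your integrity step. The claim that a \po replacing the inner $B_j^u$ must ``re-encrypt a modified plaintext under $pk_{E_{VC}}$'' and that this contradicts IND-CCA security of {\sf MQE} does not hold: encryption is public, \po already knows ${\sf Ticket(V_u)}$ (it decrypts $ET_u$ with $sk_{E_{PO}}$), and it holds $sk_{S_{PO}}$, so it can create its own commitment $c'$ with opening $r'$, sign $CAN_{j'}\Vert c'$, and fabricate a perfectly well-formed substitute $EV_u'$ for any candidate without breaking any assumption --- IND-CCA (even non-malleability) says nothing about an adversary who discards the honest ciphertext and encrypts a fresh plaintext of its choice. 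What actually prevents \emph{undetected} substitution in \evot is detection, not a reduction: the voter checks that its own $B_j^u$ appears on the bulletin board (recorded-as-cast) and can complain using the receipt $\sigma_{EV_u,PO}$, the same audit mechanism you correctly invoke for dropped ballots and the one the paper relies on in its \vc and \cc theorems. Likewise, the ticket-forgery point you reduce to EUF-CMA under $sk_{S_{RC}}$ is correct but concerns outsiders rather than \po, whose own signing key is legitimately in its hands; so that part of your argument, while true, does not add to security against \po specifically.
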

\begin{proof}
	A malicious \po cannot learn which candidate out of $\{CAN_1, \ldots, CAN_L\}$ is chosen by the voter \Vu for casting the ballot. Assume towards a contradiction that there exists a malicious \po that can learn about which candidate was chosen by the voter. We show that we can construct an adversary, say $\mathcal{C}$, against the computational hiding property of {\sf Commit}.
	\begin{itemize}
		\item Firstly, $\mathcal{C}$ invokes key generation algorithm of underlying signature protocol {\sf MQS} to generate the secret key and public key (say $ (sk, pk)$). $\mathcal{C}$ sends $( sk, pk)$ to \po, and receives a set of messages $\{CAN_1, CAN_2, \ldots, CAN_L\}$ from \po.
		\item Then $\mathcal{C}$ samples an index $j \leftarrow [L]$ uniformly, sets
		$CAN_0' := CAN_j$ and samples a message $CAN_1'$ uniformly from the message space.
		\item $\mathcal{C}$ sends $(CAN_0', CAN_1')$ to the	challenger, and receives a challenge ${ c^*}$, where ${ c^*}$ is either a commitment of $CAN_0$ or a commitment of $CAN_1$. $\mathcal{C}$ aims to guess the case.
		\item Finally, $\mathcal{C}$ returns $\delta := { c^*}$ to \po, and receives $j^*$
		from \po as the guessing of $j$. $\mathcal{C}$  returns $1$ to its own challenger if and only if $j^* = j$.
	\end{itemize}

\noindent Now we analyze $\mathcal{C}$'s advantage against the computational hiding property of {\sf Commit}. In the case that $c^*$ is a commitment of $CAN_0'$, $\delta (=c^*)$ is a commitment of $CAN_j (=CAN_0')$. Hence, \po output $j^*$ equals $j$ with probability $1/L\pm negl(\lambda)$ for some non-negligible function $negl$, and consequently $\mathcal{C}$ returns $1$ to its own challenger with the same probability $1/L+ negl(\lambda)$ as well. If $c^*$ is a commitment of $CAN_1'$, $\delta (= c^*)$ is a commitment of $CAN_1'$, a uniformly chosen message independent of $j$. Therefore, $j$ is completely hidden to \po, and \po's output $j^*$ equals $j$ with probability exactly $1/L$. Consequently, $\mathcal{C}$ returns $1$ to its own challenger with the same probability $1/L$. Overall advantage of $\mathcal{C}$ equals to $|1/L\pm negl(\lambda)-1/L|=negl(\lambda)$. This shows that $\mathcal{C}$ breaks the computational hiding property of {\sf Commit} with a non-negligible probability, leading to a contradiction. Therefore, due to the hiding property of the underlying commitment scheme {\sf Commit}, \po cannot gain any information about which $CAN_j$ has \Vu committed to.
\end{proof}

\begin{theorem}
	\evot is secure from attacks by a malicious \cc.
\end{theorem}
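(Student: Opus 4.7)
The plan is to decompose the possible misbehaviour of a malicious \cc into three essentially orthogonal threats and to dispatch each in turn: (i) that \cc cannot link a recorded ballot to a particular voter; (ii) that \cc cannot silently alter the published tally; and (iii) that \cc cannot inflate the tally with forged ballots. Before addressing these I would pin down \cc's view: the only data reaching \cc is the list of ciphertexts $B_j^u$ copied from the Bulletin Board, whose plaintext is $(\Sigma, CAN_j)$ with $\Sigma=(\sigma_j,c,r)$. The voter-identifying payload ${\sf Ticket(V_u)}$ travels only inside $EV_u={\sf MQE.Enc}({\sf Ticket(V_u)}\| B_j^u, pk_{E_{VC}})$, which is encrypted under \vc's key and therefore never reaches \cc.

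For unlinkability I would reduce to the IND-CCA security of {\sf MQE} together with the computational hiding of {\sf Commit} already exploited in the previous theorem. The reduction would fix two honest voters $V_0,V_1$ and ask \cc to guess which produced a target $B_j^u$; given an IND-CCA challenger for {\sf MQE} under $pk_{E_{VC}}$, I would embed the challenge in the $EV_u$ ciphertext that leaves \po for \vc. All auxiliary information---the oblivious-signature exchange with \po, the commitment $c$, and the receipt $\sigma_{EV_u, PO}$---can be simulated from public data, because $\sigma_j$ is produced by \po with a key independent of the voter's identity and $c$ hides $CAN_j$ by the hiding of {\sf Commit}. Any non-negligible advantage of \cc would then yield a distinguisher against {\sf MQE}, contradicting IND-CCA security, which ultimately rests on the hardness of the MQ problem.

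For tally integrity I would use the fact that the ${\sf Tally\text{ }Phase}$ ends with \cc publishing $sk_{E_{CC}}$ on the Bulletin Board. Since the list of $B_j^u$'s is written there by \vc before tallying, any external observer can recompute ${\sf MQE.Dec}(B_j^u, sk_{E_{CC}})$, re-execute ${\sf MQS.Ver}$ against $pk_{S_{PO}}$, and recount; any miscount or omission by \cc is thus publicly detectable. To rule out ballot stuffing, observe that \cc has no write-access to the list of $B_j^u$'s and so cannot smuggle in extra ciphertexts; fabricating a pair $(\Sigma^*, CAN_j)$ that passes {\sf MQS.Ver} would require forging a signature under $pk_{S_{PO}}$, which reduces to EUF-CMA of {\sf MQS} and hence again to the hardness of MQ.

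The main obstacle I expect is the unlinkability reduction: the simulator must reproduce the oblivious-signature transcript with a \po that may be colluding with \cc, as the stated threat model permits, including the moment when \po learns the commitment $c$ before \cc ever sees $B_j^u$. Showing that the opening randomness $r$ remains hidden from \po throughout---so that the voter's choice of candidate cannot leak back to \cc through \po's transcript---is where the bulk of the technical care must go; once that piece is in place, the remaining IND-CCA and EUF-CMA reductions are entirely standard.
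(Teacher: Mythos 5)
Your decomposition into unlinkability, tally integrity, and ballot stuffing is reasonable, and your tally-integrity argument coincides with the paper's: $B_j^u$ is public on the Bulletin Board, \cc publishes $sk_{E_{CC}}$ after tallying so anyone can re-decrypt, re-verify and recount, and each voter holds a receipt from \po with which to substantiate a complaint. The ballot-stuffing reduction to unforgeability of {\sf MQS} is a sensible addition that the paper only invokes in its later theorem on insider/outsider attacks.

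The unlinkability piece, however, is set up incorrectly, and it is precisely where you place the bulk of the technical effort. A malicious \cc never sees $EV_u$ at all --- that ciphertext travels from the voter through \po to \vc and is encrypted under $pk_{E_{VC}}$ --- and \cc \emph{legitimately decrypts} $B_j^u$, so its view contains the plaintext $\Sigma\Vert CAN_j=(\sigma_j,c,r,CAN_j)$ in the clear. Embedding an IND-CCA challenge for {\sf MQE} in $EV_u$ therefore does not capture what \cc learns; moreover, in your two-voter guessing game \cc wins trivially whenever $V_0$ and $V_1$ vote for different candidates, since the candidate itself is part of the decrypted plaintext. The correct (and the paper's, albeit informal) argument is an information-flow one: the decrypted content carries no voter-identifying data --- $\sigma_j$ is a signature under \po's key on $CAN_j\Vert c$, and $c,r$ are fresh commitment values chosen by the voter --- while the only identity-bearing object, ${\sf Ticket(V_u)}=(v_p,s_p)$, never reaches \cc, and even the pseudonym $v_p$ would only yield the identity by solving an MQ instance. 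Relatedly, your declared ``main obstacle'' --- keeping the opening $r$ hidden from \po so that the choice cannot leak back to \cc --- is moot: $r$ is delivered to \cc inside $\Sigma$ by design, and the protocol's defense against a colluding \po and \cc (treated in a separate theorem of the paper) is pseudonymity via MQ-hardness, not secrecy of $r$.
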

\begin{proof}
A malicious \cc may first try to obtain the information about the voter \Vu from $\Sigma|| CAN_j$. Since the signature is verified using the public key of \po, it is not possible for \cc to obtain any information about the identity of the \Vu from $\Sigma|| CAN_j$. Besides, a corrupted \cc may intentionally publish the incorrect tally result on the bulletin board. Note that $B_j^u$ is public to all and accessible from the Bulletin Board. At the end of the  Tally phase, \cc makes the decryption key ($sk_{CC}$) public. Thus, anyone can verify the result due to the access to $sk_{CC}$. Therefore, any malicious activity by \cc can easily be detected. Each individual voter also has a receipt for cast ballots, which they obtained from \po. This receipt can be used by \Vu to claim any fraudulent activity by \cc in the tally phase.
\end{proof}

\begin{theorem}
	\evot is secure from the attacks by a malicious \vc.
\end{theorem}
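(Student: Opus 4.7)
The plan is to break a malicious \vc's possible attacks into two categories, ballot secrecy (learning the voter's choice) and ballot integrity (tampering, dropping, or injecting ballots), and rule each out separately.

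For secrecy, the key observation is that although \vc decrypts $EV_u$ and therefore learns the pair $({\sf Ticket(V_u)}, B_j^u)$, the inner ciphertext $B_j^u = {\sf MQE.Enc}(\Sigma \Vert CAN_j, pk_{E_{CC}})$ is encrypted under \cc's public key, which \vc does not hold. I would argue by reduction to the IND-CCA security of {\sf MQE}, mirroring the structure of the preceding two theorems. Concretely, suppose a malicious \vc can determine which $CAN_j$ a target voter selected with non-negligible advantage; then build an adversary $\mathcal{C}$ that, on receiving the challenge public key $pk_{E_{CC}}$, simulates the entire protocol transcript for \vc (simulating \rc, \po, \Vu and \cc honestly under its own keys), and in the target session selects two candidates $CAN_0', CAN_1'$, hands them to its own IND-CCA challenger, receives a challenge ciphertext $B^*$, and plants $B^*$ as $B_j^u$ for the target voter. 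If \vc outputs the correct candidate with non-negligible advantage over $1/L$, $\mathcal{C}$ inherits the same advantage against IND-CCA of {\sf MQE}, a contradiction. The same reduction shows that \vc cannot link a ticket to a vote content, so anonymity against \vc follows.

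For integrity, I would argue each concrete misbehaviour in turn. (i) \vc cannot silently drop or modify an honest voter's ballot, because each \Vu can inspect the Bulletin Board and verify that her own $B_j^u$ appears verbatim (the \emph{recorded-as-cast} property); additionally \po holds a signed receipt $\sigma_{EV_u,PO}$ from \vc for every $EV_u$ submitted, which can be opened publicly to expose suppressed ballots. (ii) \vc cannot inject a spurious ballot $B'$ on behalf of a non-registered voter, since posting requires a valid ${\sf Ticket(V_u)} = (v_P, s_P)$, and producing $s_P$ without the trusted \rc's signing key would break the EUF-CMA security of {\sf MQS}. (iii) \vc cannot double-vote on behalf of an honest registered voter, because \rc publishes all issued pseudonyms and \vc's database is auditable: reusing the same ticket is publicly detectable. (iv) A ballot written to the Bulletin Board that does not carry \po's oblivious signature $\sigma_j$ on some $CAN_i \Vert c$ is rejected by \cc in the Tally Phase, so injected ballots would again require forging \po's signature.

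The main obstacle I anticipate is the secrecy reduction rather than the integrity arguments: unlike the \po case, \vc sees both the voter's identifier and an associated ciphertext in the clear pairing $({\sf Ticket(V_u)} \Vert B_j^u)$, so one must be careful that the simulator can embed the IND-CCA challenge precisely in the $B_j^u$ slot while still producing a well-formed outer encryption $EV_u$ under $pk_{E_{VC}}$ — which \vc itself decrypts. The subtle point is that the decryption oracle granted to $\mathcal{C}$ is for $pk_{E_{CC}}$, and \vc never queries $pk_{E_{CC}}$-decryptions in the protocol; \cc does, but only after \cc publishes $sk_{E_{CC}}$, which happens after the tally and hence after \vc must commit to its guess. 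I would handle this by having $\mathcal{C}$ abort if \vc's view would require revealing $sk_{E_{CC}}$ before output, and argue this abort probability is negligible, so the reduction loss is tight enough to inherit \vc's non-negligible advantage.
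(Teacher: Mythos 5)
Your integrity analysis matches and in fact extends the paper's (the paper only discusses \vc suppressing ballots, detected via the Bulletin Board check and the receipt), and your IND-CCA reduction for the contents of $B_j^u$ is a more formal rendering of the paper's appeal to the security of {\sf MQE}. However, there is a genuine gap on the anonymity side. The paper's proof rests on two separate observations: (a) $B_j^u$ is encrypted under $pk_{E_{CC}}$, so \vc cannot read the vote during the protocol, and (b) the ticket that \vc sees in the clear contains only the pseudonym $v_p=\mathcal{P}({\sf ID}_{\sf V_u}\Vert a)$, so recovering the voter's identity from \vc's view requires inverting the random system $\mathcal{P}$, i.e., solving an MQ instance. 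You prove (a) carefully but omit (b) entirely, and (b) is not a redundant detail: in the Tally Phase \cc publishes $sk_{E_{CC}}$, after which \vc --- who stored every pair $({\sf Ticket(V_u)}\Vert B_j^u)$ in its database --- can decrypt each $B_j^u$ and link every ticket to a candidate. At that point the IND-CCA argument gives nothing (the key is public), and the only barrier between a ballot and the voter's real identity is the one-wayness of the pseudonym.

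Consequently your claim that ``anonymity against \vc follows'' from the encryption reduction is overstated, and your own patch --- aborting the reduction if $sk_{E_{CC}}$ would have to be revealed before \vc outputs its guess --- only salvages a pre-tally secrecy game; it does not address the voter--vote unlinkability that the theorem is actually about, since the protocol deliberately makes vote contents public at the end. To close the gap you need the paper's second argument: extracting ${\sf ID}_{\sf V_u}$ from $v_p$ amounts to solving the MQ problem for the random public system $\mathcal{P}$, which is assumed hard, so even a \vc that eventually learns every $(v_p, CAN_j)$ pairing cannot tie a vote to a voter's identity.
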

\begin{proof}
\vc obtains ${\sf Ticket(V_u)}$ and $B_j^u$ after decryption of encrypted cast ballots $EV_u$. From $ {\sf Ticket(V_u)}=(v_P,s_P)$, \vc may obtain the pseudonym $v_p$ of the  voter \Vu. Note that $v_p$ is generated using a random multivariate polynomial system by evaluating $v_p=\mathcal{P}({\sf ID}_{\sf V_u}\Vert a)$, thus, obtaining identity information ${\sf ID}$ from $v_p$ requires the \vc to solve an instance of MQ problem. Thus, \vc cannot determine the relationship between a voter and the vote.  \vc may also try to learn some information about either vote or voter from the cast ballot $B_j^u$. Recall that $B_j^u$ is encrypted with the public key of \cc using a secure encryption scheme {\sf MQE}. Thus, \vc cannot learn about the contents of cast ballot from $B_j^u$. Besides, a malicious \vc may try to cheat by not publishing some of the cast ballots $B_j^u$. \Vu can detect such malicious activity by checking whether his/her cast ballot appears on the bulletin board. \Vu can use the receipt of cast ballot to claim against such fraud. 
\end{proof}

\begin{theorem}
	\evot is secure from attacks by a malicious outside user or an inside voter $\mathcal{V}$.
\end{theorem}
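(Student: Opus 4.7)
The plan is to partition the possible adversarial behaviors and reduce each to either the hardness of the MQ problem, the EUF-CMA security of \textsf{MQS}, the IND-CCA security of \textsf{MQE}, or the bookkeeping policies enforced by \rc, \po, and \vc. An outside attacker (not on the electoral roll) or a malicious voter $\mathcal{V}$ essentially has three attack goals: (i) cast an unauthorized or additional ballot, (ii) impersonate another registered voter, or (iii) tamper with / replay / repudiate an honestly cast ballot traveling on the open channel.

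First I would rule out unauthorized ballot casting. To have a ballot accepted, the adversary must deliver to \po an $ET_u$ whose plaintext $\textsf{Ticket}(\mathcal{V}) \| tm_u$ contains a pair $(v_p, s_p)$ with $s_p$ a valid \rc-signature on $v_p$. Producing such a pair for a pseudonym that \rc never signed directly contradicts EUF-CMA of \textsf{MQS}, so the only remaining option is to reuse a ticket harvested from a legitimate $ET_{u'}$. This is blocked because \po stores every consumed ticket, \vc repeats the check before posting $B_j^u$, and the timestamp $tm_u$ under the encryption lets \po detect stale submissions; mauling the ciphertext to produce a fresh plaintext with the same ticket reduces to the IND-CCA security of \textsf{MQE}.

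Next I would handle impersonation. To register as another voter $\mathcal{V}'$, the attacker would need a value $a'$ with $\mathcal{P}(\textsf{ID}_{\mathcal{V}'} \Vert a') = v_p'$ for the published pseudonym $v_p'$; by the definition of $\mathcal{P}$ as a random multivariate quadratic system, this is exactly an MQ instance and is therefore infeasible. Stealing an already-issued ticket from \rc's database requires breaking the encrypted channel (again IND-CCA of \textsf{MQE}) or colluding with \rc, which is excluded by the trust model. Double voting by $\mathcal{V}$ itself is prevented directly by the non-reuse check on $\textsf{Ticket}(\mathcal{V})$ at \po and \vc. For the man-in-the-middle variants in the open channel, every sensitive message is either encrypted under the relevant authority's public key (so secrecy and integrity follow from IND-CCA) or accompanied by a signature verified against \textbf{pp} (so authenticity follows from EUF-CMA of \textsf{MQS}).

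Finally I would treat repudiation and post-cast tampering by $\mathcal{V}$. The \po issues a signed receipt $\sigma_{EV_u,PO}$ binding the voter to the exact ciphertext $EV_u$, and \vc issues analogous receipts to \po; because $B_j^u$ is embedded inside $EV_u$ and later published verbatim, any inconsistency between what appears on the bulletin board and what the receipt certifies is publicly detectable. The main obstacle I expect is the active-channel case: formally arguing that no interleaving of intercepting, re-encrypting and resubmitting messages yields an accepted ballot requires a careful hybrid argument that simultaneously invokes IND-CCA of \textsf{MQE} on each encrypted segment and EUF-CMA of \textsf{MQS} on each signature, while relying on the non-reuse databases to close off replays; stringing these reductions together without introducing a covert channel is the delicate part of the proof.
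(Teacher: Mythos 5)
Your decomposition (unauthorized ballots, impersonation, replay/tampering) covers and in fact refines what the paper argues: the paper's own proof is much terser, resting on three observations --- only \rc-verified voters obtain a ${\sf Ticket}$, \po's database check prevents ticket reuse, and all accepted ballots are published on the bulletin board for public verification --- plus the remark that forging a \po-endorsed vote would break the unforgeability of \textsf{MQS} or the binding of \textsf{Commit}. Your version buys more: you make the ticket-forgery step an explicit reduction to EUF-CMA of \textsf{MQS}, you reduce pseudonym impersonation to the MQ problem (the paper only uses that argument in the \vc and collusion theorems, not here), and you at least flag the active-channel/replay issue, which the paper silently ignores. Be careful, though, with the claim that ``integrity follows from IND-CCA'': since all encryption keys are public, IND-CCA gives non-malleability of a given ciphertext but does not stop an attacker from substituting a wholly fresh encryption of its own plaintext; what actually blocks that substitution is the ticket/signature material inside the plaintext, so the reduction should be phrased around those, as you partly do.

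The one genuine omission relative to the paper is the role of the \emph{statistically binding} property of \textsf{Commit} against an inside voter. In the voting phase $\mathcal{V}$ legitimately receives $\tilde{\sigma}=(\sigma_1,\ldots,\sigma_L)$, i.e.\ valid \po-signatures on \emph{every} $CAN_i\Vert c$ for his single commitment $c$. Nothing in your case analysis prevents such a voter from later opening $c$ to a candidate different from the one committed (or from presenting $(\sigma_i,c,r')$ for some $i\neq j$): ticket non-reuse does not help here, because only one ballot is submitted, and EUF-CMA does not help, because the signatures were honestly issued by \po. The paper closes exactly this hole by invoking statistical binding --- the voter cannot produce a second valid opening of $c$ --- and your proof should include that step to handle the inside-voter goal of getting a \po-endorsed vote for a candidate other than the committed one.
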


\begin{proof}
In the registration phase, \rc verifies $\mathcal{V}$’s voting qualification by his/her identity ${\sf ID}$. Only legitimate voters can get the ${\sf Ticket}$ from \rc; therefore, only eligible voters could cast the votes. In the voting phase, \po checks if ${\sf Ticket}(V_u)$ is stored in the database; if it is not, \po aborts the process, preventing $\mathcal{V}$ from reusing ${\sf Ticket}(V_u)$. In the tally phase, all valid ballots are published on the Bulletin board and can be verified publicly, preventing invalid ballots by malicious $\mathcal{V}$. From the design of the \evot, it also follows that nobody can forge the signature on behalf of \po. This is because generating a signature on behalf of \po amounts to breaking the unforgeability of the underlying signature and statistically binding property of the underlying commitment scheme.
\end{proof}

\begin{theorem}
	\evot is secure from the collusion attacks.
\end{theorem}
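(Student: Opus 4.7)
The plan is to show that a coalition comprising \po, \vc, and \cc cannot do materially better than any one of them acting alone, because the combined view still fails to reveal the link between a voter's true identity ${\sf ID}_{V_u}$ and the selected candidate $CAN_j$. First I will enumerate the joint view of the coalition during an honest execution: \po collects ${\sf Ticket}(V_u)=(v_P,s_P)$, the commitment $c$, and the outer ciphertext $EV_u$; \vc decrypts $EV_u$ to obtain $({\sf Ticket}(V_u),B_j^u)$; and \cc decrypts $B_j^u$ to recover $(\Sigma,CAN_j)$. Pooling everything gives the coalition at most the per-ballot tuple $(v_P,s_P,c,\Sigma,CAN_j)$, i.e.\ a binding between the \emph{pseudonym} $v_P$ and the vote $CAN_j$, but never the real ${\sf ID}_{V_u}$.

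Next I will argue that promoting this view from a pseudonym-vote link into an identity-vote link is exactly as hard as the MQ problem. Since \rc is trusted and its registration database is never leaked, the only remaining way to recover ${\sf ID}_{V_u}$ from $v_P=\mathcal{P}({\sf ID}_{V_u}\Vert a)$ is to find a preimage of $v_P$ under the public multivariate system $\mathcal{P}$. I would formalise this with a reduction: given a coalition adversary $\mathcal{A}$ that, on the above joint view, outputs ${\sf ID}_{V_u}$ with non-negligible probability, I embed a target MQ instance as $\mathcal{P}$ in ${\sf pp}$, simulate honest registrations by sampling pseudonyms and signing them with a freshly generated \rc signing key, simulate the inner encryptions using the IND-CCA security of {\sf MQE}, and return whatever preimage $\mathcal{A}$ produces as the answer to the MQ challenge.

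I would then dispose of the integrity-oriented attack vectors the coalition might still attempt. Casting a ballot on behalf of an absent voter requires either forging the \rc signature $s_P$ on a fresh pseudonym (breaks EUF-CMA of {\sf MQS}) or persuading the trusted \rc to issue two tickets for the same identity (prevented by its database check). Suppressing a posted ballot is detectable on the public bulletin board, because the voter retains the \po-signed receipt $\sigma_{EV_u,PO}$ and the \vc-signed acknowledgement; any missing $B_j^u$ can be attributed publicly to the coalition without breaking signature unforgeability. Finally, because \cc publishes $sk_{E_{CC}}$ at the end of the tally phase, any external observer can re-decrypt each $B_j^u$ and recompute the tally, so the coalition cannot misreport totals without being caught.

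The main obstacle I expect is arguing that the coalition's joint view remains \emph{simulatable} even when it sees the transcripts of many voters at once, because cross-voter correlations are a common source of slippage in such reductions. I would handle this by observing that each pseudonym is formed from an independently sampled nonce $a$ and every encryption uses fresh randomness in the IND-CCA scheme {\sf MQE}, so the entire multi-voter transcript can be produced by a simulator that knows only the public $\mathcal{P}$, the multiset of cast candidates $\{CAN_{j_i}\}$, and a list of uniformly random pseudonyms; consequently, no amount of cross-voter aggregation helps the coalition invert $\mathcal{P}$ on any specific target $v_P$, which is precisely the MQ hardness assumption invoked in the reduction above.
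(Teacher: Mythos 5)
Your proposal is correct and follows essentially the same route as the paper: the colluding \po, \vc, and \cc jointly learn only a pseudonym--vote pair, recovering ${\sf ID}_{V_u}$ from $v_P=\mathcal{P}({\sf ID}_{V_u}\Vert a)$ is reduced to the hardness of the MQ problem, and ballot-integrity attacks are ruled out via the trusted \rc, the published pseudonyms and receipts, and the eventually published decryption key $sk_{E_{CC}}$. The paper gives this argument only informally, so your explicit reduction sketch, EUF-CMA remark, and simulatability discussion add rigor to, rather than diverge from, the paper's proof.
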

\begin{proof}
	In this attack scenario, we consider the setting where \po, \cc, and \vc collude to corrupt the voting process. Note that the \cc has access to the $\Sigma \Vert CAN_j$. Even with the help of \po and \vc, no information about the voter identity can be obtained from $\Sigma \Vert CAN_j$. Both \po and \vc has access to {\sf Ticket}(\Vu) $=(v_p, s_p)$. But obtaining any information about the identity of a voter from the pseudonym $v_p$ requires solving the MQ problem  $v_p=\mathcal{P}({\sf ID}||a)$. Therefore, even the collusion of the parties mentioned above cannot reveal the information about the voter.   In the registration phase, \rc publishes all pseudonyms that have obtained certifications. Therefore, it is impossible to introduce fake ballots by collusion of \vc, \cc, and \po.
	
\end{proof}

\section{Performance Analysis \label{sec:performance}}
   
Our proposed design \evot employs basic building blocks such as a secure multivariate encryption ({\sf MQE}), a multivariate signature ({\sf MQS}), and a commitment protocol {\sf Commit}. Using standardized cryptographic building blocks such as encryption and signature ensures interoperability with existing systems and protocols. It also helps to integrate the \evot design within different platforms and environments. Since these building blocks are well understood and widely documented, it's easier for independent parties to verify the correctness and security of our proposed protocol. Moreover, using standard primitives simplifies the implementation and maintenance of \evot.\\\\

\noindent {\bf Computation cost:} In the preparation phase of \evot, {\sf MQE.Kg} is invoked four times to generate the public key-private key pairs for \rc, \cc, \po, and \vc. In addition, four executions of underlying {\sf MQS.Kg} algorithm is needed to generate the corresponding signature keys for the abovementioned parties. Moreover, one random system of multivariate polynomial equations is needed to be evaluated in the registration phase. {\sf MQS.Sign} algorithm is performed once by the \rc to generate the {\sf Ticket} for the voter \Vu. In the voting phase, {\sf MQE.Enc} is executed three times during the interaction between \Vu and \po. In addition, the voting phase also requires one instance of {\sf MQE.Dec} and $L+1$ instances each of {\sf MQS.Sign} and {\sf MQS.Ver} algorithm. One instance of {\sf MQE.Dec} algorithm is needed to be performed in the verification phase by the \vc to obtain the {\sf Ticket} and the ballot $B_j^u$ of the voter. \vc executes the {\sf MQS.Ver} algorithm once to verify the validity of the ticket. In the tally phase, we require one execution each of {\sf MQE.Dec} and {\sf MQS.Ver}. Let ${\sf T}_{enc}$ denotes the run-time cost of one instance of {\sf MQE.Enc}, and ${\sf T}_{sig}$ depicts the run-time cost of {\sf MQS.Sig} algorithm. Let ${\sf T}_{kg\mbox{-}sig}$ and ${\sf T}_{kg\mbox{-}enc}$ depict respectively, the run-time cost of key generation algorithm of {\sf MQS} and encryption {\sf MQE}. In addition, let ${\sf T}_{ver}$ and ${\sf T}_{dec}$ represent respectively, the running time of underlying {\sf MQS.Ver} and {\sf MQE.Dec} algorithms. In addition, let ${\sf T}_{eval}$ depicts the time complexity of evaluating a multivariate polynomial system. Table \ref{tab:run-time} summarizes the time complexity of each of the phases.\\\\

\noindent {\bf Storage and communication overhead}: We discuss the storage and communication cost of \evot. Let {\tt |S|} and {\tt |E|} denote respectively, the size of signature and ciphertext produced by the {\sf MQS} and {\sf MQE} respectively. Let {\tt |SK-E|} and {\tt |PK-E|} denote the size of the secret key and public key of {\sf MQE} respectively. Similarly, let {\tt |SK-S|} and {\tt |PK-S|} depict respectively the size of the secret key and public key of {\sf MQS}. Let {\tt |Commit|} denote the commitment size outputted by the underlying {\sf Commit} protocol. We summarize the storage and communication overhead of \evot in Table \ref{tab:storage}.

\begin{table}[]
	\centering
	\begin{tabular}{|l|l|}
		\hline
		
	{\sf Preparation Phase}      & $4{\sf T}_{kg\mbox{-}sig}$ +  $4{\sf T}_{kg\mbox{-}enc}$ \\ \hline
		{\sf Registration Phase}    & ${\sf T}_{eval}$ + ${\sf T}_{sig}$ \\ \hline
		{\sf Voting Phase} & $3{\sf T}_{enc}$ + $(L+1) {\sf T}_{sig}$ + $(L+1) {\sf T}_{ver}$+ ${\sf T}_{dec}$  \\ \hline
		{\sf Verification Phase} & ${\sf T}_{dec}$ + ${\sf T}_{sig}$\\\hline
		{\sf Tally Phase} & ${\sf T}_{dec}$+ ${\sf T}_{ver}$ \\ \hline
	\end{tabular}
	\caption{Run time analysis of \evot}
	\label{tab:run-time}
\end{table}

\begin{table}[]
	\centering
	\begin{tabular}{|l|l|}
		\hline
		Size of ticket      & {\tt |S|}  \\ \hline
		Size of ballot $B_j^u$       & {\tt |E|} \\ \hline
		Size of cast encrypted ballots $EV_u$ & {\tt |E|}  \\ \hline
		Size of public parameter {\sf pp}  & 4{\tt |SK-E|}+ 4{\tt |PK-E|}+ 4{\tt |SK-S|}+ 4{\tt |PK-S|} \\\hline
		Size of vote & {\tt |S|+ {\tt |Commit|}} \\ \hline
	\end{tabular}
	\caption{Storage and communication overhead of \evot}
	\label{tab:storage}
\end{table}

\section{Conclusion \label{sec:conclusion}}

Traditional paper ballot elections have various weaknesses that might compromise the legitimacy, efficiency, and accessibility of the democratic process. E-Voting uses internet-based systems to make voting more accessible and efficient. It addresses the problems in conventional paper ballot elections. In this paper, we designed and analyzed the \textit{first} E-Voting protocol (namely \evot) based on multivariate public key cryptography. The security of our protocol is based on the hardness of the MQ problem. We utilized fundamental cryptographic primitives such as encryption and signature as the building blocks in our design. A thorough security and performance analysis showed that our proposed design is efficient and secure. 

{\subsection*{Data Availability Statement}
\hl{No new data were generated or analyzed in support of this research.}
\subsection*{Funding Details}
This work was supported by CEFIPRA CSRP project number 6701-1.

\end{document}